\documentclass[10pt,letterpaper]{article}
\usepackage[ruled]{algorithm2e}
\usepackage{graphicx}
\usepackage{algorithmic}
\usepackage{authblk}
\usepackage{multirow}
\usepackage{amsmath}
\usepackage{amssymb}
\usepackage{amsthm}
\usepackage[T1]{fontenc}
\usepackage{times}
\usepackage{geometry}
\geometry{left=1in,right=1in,top=1.5in,bottom=1.5in}


\newcommand{\call}{\mathcal{C}_\mathrm{all}}

\newcommand{\ie}{{\em i.e.},~}

\newcommand{\calsple}{\mathcal{S}_{P,\mathit{LE}}}

\newcommand{\tmin}{t_{\mathrm{min}}}

\newcommand{\sinit}{s_{\mathrm{init}}}
\newcommand{\outputs}{\pi_{\mathrm{out}}}
\newcommand{\cinit}[1]{C_{\mathrm{init},#1}}

\newcommand{\sch}{\mathbf{\Gamma}}

\newtheorem{theorem}{Theorem}
\newtheorem{lemma}{Lemma}
\newtheorem{col}{Corollary}
\newtheorem{proposition}{Proposition}
\newtheorem{observation}{Observation}
\theoremstyle{definition}

\newcommand{\ex}{\mathbf{E}}
\newcommand{\var}{\mathbf{Var}}

\newcommand{\cnlogn}{\lfloor c_v n\ln n \rfloor}
\newcommand{\threshold}{\lceil n^{2/3}\rceil}

\title{Leader Election Requires Logarithmic Time in Population Protocols}
\date{}
\author{Yuichi Sudo\thanks{Corresponding Author: y-sudou@ist.osaka-u.ac.jp} }
\author{Toshimitsu Masuzawa\thanks{masuzawa@ist.osaka-u.ac.jp}}

\affil{Graduate School of Information Science and Technology, Osaka University, Japan}

\begin{document}
\maketitle
\begin{abstract}
\normalsize
This paper shows that every leader election protocol
requires logarithmic stabilization time
both in expectation and with high probability
in the population protocol model.
This lower bound holds even if
each agent has knowledge of the exact size of a population
and is allowed to use an arbitrarily large number of agent states. 
This lower bound concludes that
the protocol given in [Sudo et al., SSS 2019] is
time-optimal in expectation.
\end{abstract}

\section{Introduction}
\label{sec:intro}
We consider the \emph{population protocol} (PP) model \cite{original} in
 this paper.
A network called \emph{population} consists of a large number of automata,
called \emph{agents}.
Agents make \emph{interactions}
 (i.e., pairwise communication) with each other
by which they update their states.
Agents are strongly anonymous:
they do not have identifiers
and they cannot distinguish
their neighbors with the same state.
As with the majority of studies on population protocols\cite{original,AG15,AAE+17,AAG18,GS18,GSU18,MST18,DS18,kanjiko,SOK18},
we assume that the network of agents is a complete graph
and that the scheduler
selects an interacting pair of agents at each step
uniformly at random.

In this paper, we focus on the leader election problem, which is one of the most fundamental
and well studied problems in the PP model.
The leader election problem
requires that starting from a specific initial
configuration,
a population reaches a safe configuration in which exactly one leader exists
and the population keeps that unique leader
thereafter.

\subsection{Related Work}
There have been many works which study the leader election problem in the PP model (Tables \ref{tbl:protocols} and \ref{tbl:lower}). 
Angluin et al.~\cite{original} gave the first leader election protocol,
which stabilizes in $O(n)$ parallel time in expectation and uses
only constant space of each agent,
where $n$ is the number of agents
and ``parallel time'' means
the number of steps divided by $n$.
If we stick to constant space,
this linear parallel time is optimal;
Doty and Soloveichik \cite{DS18} showed that
any constant space protocol requires linear parallel time
to elect a unique leader.
Alistarh and Gelashvili \cite{AG15} made a breakthrough
in 2015;
they achieved poly-logarithmic stabilization time
($O(\log^3 n)$ parallel time)
by increasing the number of states from $O(1)$
to only $O(\log^3 n)$.
Thereafter, the stabilization time has been improved
by many studies \cite{BCER17,AAG18,GS18,GSU18,MST18}.
G{\k{a}}sieniec et al.~\cite{GSU18} gave
a state-of-art protocol that stabilizes in
$O(\log n \cdot \log \log n)$ parallel time
with only $O(\log \log n)$ states.
Its space complexity is optimal;
Alistarh et al.~\cite{AAE+17} showed that
any poly-logarithmic parallel time algorithm
requires $\Omega(\log \log n)$ states.
Michail et al.~\cite{MST18} gave a protocol
with $O(\log n)$ parallel time 
but with a linear number of states.
Our previous work \cite{SOT+19} gave a protocol
with $O(\log n)$ parallel time
and $O(\log n)$ states.
Those protocols with non-constant
number of states~\cite{AG15,AAE+17,BCER17,AAG18,GS18,GSU18} are not \emph{uniform}, that is, they require
some rough knowledge of $n$.
For example,
in the protocol of \cite{GS18},
a $\Theta(\log \log n)$ value
must be hard-coded to set the maximum value of
one variable (named $l$ in that paper).
One can find detailed information about the leader election in the PP model in two survey papers \cite{AG18,ER18}.

There is a folklore that
any leader election protocol requires
$\Omega(\log n)$ parallel time in the population protocol model.
One may think that this lower bound trivially holds
because
several agents have no interactions
during $o(\log n)$ parallel time with probability $1-o(1)$.
However, as Alistarh and Gelashvili \cite{AG15}
pointed out, this idea is not sufficient
to prove the folklore.
Let us discuss it in detail here.
The lower bound of $\Omega(\log n)$ expected parallel time
holds almost trivially if
the initial output of the agents is $L$
(\ie all the agents are leaders initially).
This is because 
we need $\Omega(\log n)$ expected parallel time
before $n-1$ agents have at least one interaction each.
What if the initial output is $F$
(\ie all the agents are non-leaders initially)?
For any small constant $\epsilon$,
we can prove that with a constant probability,
$\Omega(n^{1-\epsilon})$ agents remains still \emph{inexperienced}
after the first period of $o(\log n)$ parallel time in an execution, 
that is, they have no interactions during the period.
However, this does not immediately mean that
$\Omega(\log n)$ parallel time is necessary to elect a leader
in expectation
because those $\Omega(n^{1-\epsilon})$ inexperienced agents
are non-leaders.
We have to show that
no leader election protocol 
can create a unique leader with $o(\log n)$ expected parallel time
starting from the initial configuration where all agents are non-leaders.
To the best of our knowledge, 
there is no proof in the literature for this folklore,
that is, the lower bound of $\Omega(\log n)$ parallel time
on the stabilization time for leader election.

\begin{table}
\center
\caption{Leader Election Protocols (Stabilization time is shown in terms of expected parallel time)}
\label{tbl:protocols}
\vspace{0.3cm}

\begin{tabular}{c c c}
\hline
 & States & Stabilization Time\\
\hline
\cite{original}
& $O(1)$ & $O(n)$\\
\cite{AG15}
&$O(\log^3 n)$ & $O(\log^3 n)$\\
\cite{AAE+17}& $O(\log^2 n)$
&$O(\log^{5.3}n \cdot \log \log n)$\\
\cite{AAG18}&
$O(\log n)$&$O(\log^2 n)$\\
\cite{GS18}&
$O(\log \log n)$&$O(\log^2 n)$\\
\cite{GSU18}&
$O(\log \log n)$&$O(\log n \cdot \log \log n)$\\
\cite{MST18}& $O(n)$&$O(\log n)$\\
\cite{SOT+19}&
$O(\log n)$&$O(\log n)$\\
\hline
\end{tabular}
\end{table}

\begin{table}
\center
\caption{Lower Bounds for Leader Election (Stabilization time is shown in terms of expected parallel time)}
\label{tbl:lower}
\vspace{0.3cm}

\begin{tabular}{c c c}
\hline
 & States & Stabilization Time\\
\hline
\cite{DS18} & $O(1)$ & $\Omega(n)$\\
\cite{AAE+17} & $<1/2 \log \log n$ & $\Omega(n/ \mathrm{polylog} n)$\\
This work & any large  & $\Omega(\log n)$\\
\hline
\end{tabular}
\end{table}

\subsection{Our Contribution}
In this paper, we prove the above folklore,
that is, we show that 
any leader election protocol requires
$\Omega(\log n)$ parallel time in expectation.
As mentioned above, most of recent protocols
uses a non-constant (poly-logarithmic, in most cases)
number of states and assume that rough knowledge of the population size
is given to each agent.
This lower bound holds even if
each agent can use an arbitrarily large number of states
and knows the exact size of a population.
Thus, by this lower bound, 
we can say that the protocols of \cite{MST18} and \cite{SOT+19}
are optimal in terms of expected stabilization time.

In our proof for the lower bound,
we do not assume that every leader election protocol always stabilizes to elect a unique leader. 
Therefore, our lower bound holds even if we
allow a protocol to have a (small) probability
that it fails to elect a unique leader.

Strictly speaking, we give a stronger lower bound than
$\Omega(\log n)$ parallel stabilization time
\emph{in expectation}.
Instead, we show that every leader election protocol
requires $\Omega(\log n)$ parallel time to stabilize
with probability $1-o(1)$.
This lower bound immediately gives the above
lower bound in expectation.
Moreover, it immediately yields that
no leader election protocol stabilizes
within $o(\log n)$ parallel time
with high probability;
every leader election protocol stabilizes
within $o(\log n)$ parallel time
with probability $o(1)$.

To prove the lower bound, we introduce a novel notion
that we call $\emph{influencers}$.
At any time of an execution, the influencers of an agent $v$
is the set of agents that could influence on the current state of $v$.
The size of the influencers is monotonically non-decreasing,
and grows with the same speed as \emph{epidemics}, which Angluin et al.~\cite{fast} introduced in order to analyze fast protocols to compute any semi-linear predicate.
Actually, 
we will prove the lower bound essentially by showing that
$\Omega(\log n)$ parallel time is necessary for the number of influencers of any agent $v$
to reach $\Omega(n^{2/3})$.



\section{Preliminaries}
\label{sec:pre}
In this section, we specify the population protocol model.
For simplicity, we omit some elements of the population protocol model
that are not needed to study leader election.
Specifically, we remove input symbols and input functions
from the definition of population protocols.

A \emph{population} is
a network consisting of {\em agents}.
We denote the set of all the agents by $V$ and let $n = |V|$.
We assume that a population is a complete graph,
thus every pair of agents $(u,v)$ can interact,
where $u$ serves as the \emph{initiator}
and $v$ serves as the \emph{responder} of the interaction.

A \emph{protocol} $P(Q,\sinit,T,Y,\outputs)$ consists of 
a finite set $Q$ of agent states,
an initial state $\sinit \in Q$,
a transition function
$T:  Q \times Q \to Q \times Q$,
a finite set $Y$ of output symbols, 
and an output function $\outputs : Q \to Y$.
Every agent is in state $\sinit$
when an execution of protocol $P$ begins.
When two agents interact,
$T$ determines their next states
according to their current states.
The \emph{output} of an agent is determined by $\outputs$:
The output of an agent in state $q$ is $\outputs(q)$.
As with all papers listed in Table \ref{tbl:protocols} except for \cite{original},
we assume that a rough knowledge of $n$ is available.
Specifically, we assume that an integer
$m$ such that $m \ge \log_2 n$
and $m=\Theta(\log n)$ is given,
thus we can design $P(Q,\sinit,T,Y,\outputs)$
using this input $m$,
\ie $Q$, $\sinit$, $T$, $Y$,
and $\outputs$ can depend on $m$.

A \emph{configuration} is a mapping $C : V \to Q$ that specifies
the states of all the agents.
We define $\cinit{P}$ as the configuration of $P$
where every agent is in state $\sinit$.
We say that a configuration $C$ changes to $C'$ by the interaction
$e = (u,v)$,
denoted by $C \stackrel{e}{\to} C'$,
if
$(C'(u),C'(v))=T(C(u),C(v))$
and $C'(w) = C(w)$
for all $w \in V \setminus \{u,v\}$.

A \emph{schedule} $\gamma = \gamma_0,\gamma_1,\dots
=(u_0,v_0),(u_1,v_1),\dots~$
is a sequence of interactions.
A schedule determines which interaction occurs at each \emph{step},
\ie interaction $\gamma_t$ happens at step $t$
under schedule $\gamma$.
We consider a \emph{uniformly random scheduler}
$\sch=\Gamma_0, \Gamma_1,\dots$
where
each $\Gamma_t$ ($t \ge 0$) 
is a random variable that specifies
the interaction $(u_t,v_t)$ at \emph{step} $t$
and satisfies
$\Pr(\Gamma_t = (u,v)) =\frac{1}{n(n-1)}$
for any distinct $u,v \in V$.
Given a schedule $\gamma=\gamma_0,\gamma_1,\dots$,
the \emph{execution} of protocol $P$ starting from a configuration
$C_0$ is uniquely defined as
$\Xi_{P}(C_0,\gamma) = C_0,C_1,\dots$ such that
$C_t \stackrel{\gamma_t}{\to} C_{t+1}$ for all $t \ge 0$. 
We usually focus on $\Xi_{P}(\cinit{P},\sch)$.
We say that agent $v \in V$ \emph{participates} in $\Gamma_t$
if $v$ is either the initiator or the responder of $\Gamma_t$.
We say that a configuration $C$ of protocol $P$ is reachable 
if the initial configuration $\cinit{P}$
changes to $C$ by some finite sequence of interactions
$\gamma_0,\gamma_1,\dots,\gamma_{k}$.
We define $\call(P)$ as the set of all reachable configurations of $P$.

The leader election problem requires that 
every agent should output $L$ or $F$ which means
``leader'' or ``follower'' respectively.
Let $\calsple$ be the set of the configurations of $P$
such that each $C \in \calsple$ satisfies the following:
\begin{itemize}
 \item exactly one agent outputs $L$ (\ie is a leader) in $C$,
and 
 \item no agent changes its output 
in the execution $\Xi_P(C,\gamma)$ for any schedule $\gamma$.
\end{itemize}
We call the configurations of $\calsple$
the \emph{safe} configurations of $P$.
We say that an execution of $P$ \emph{stabilizes}
when it reaches a configuration in $\calsple$.
For any leader election protocol $P$, 
we define the \emph{stabilization time} of $P$
as the number of steps during which
execution $\Xi_P(\cinit{P},\sch)$
reaches a configuration in $\calsple$,
divided by the number of agents $n$.
The division by $n$ implies
that we evaluate the stabilization time
in terms of parallel time.
Since $\sch$ is a random variable,
the stabilization time of $P$ is also a random variable.
Thus, we usually evaluate it in terms of
``in expectation'' or ``with high probability''.


\section{Lower Bound}
Let $P(Q,\sinit,T,Y,\outputs)$
be any leader election protocol.
We fix protocol $P$ and its execution
$\Xi = \Xi_{P}(\cinit{P},\sch) = C_0,C_1,\dots$
throughout this section.
We call $C_t$ the configuration at step $t$
or $t$-th configuration.
Note that each $C_t$ is a random variable.
Our goal is to prove the following proposition.

\begin{proposition}
\label{prop:goal}
For some constant $c$,
the (parallel) stabilization time of $P$
is at least $c\ln n$ with probability $1-o(1)$.
\end{proposition}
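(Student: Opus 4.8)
The plan is to formalize the notion of \emph{influencers} and show that the information about the initial ``symmetry-breaking'' events cannot propagate fast enough. For an agent $v$ at step $t$, define its influencer set $I_t(v) \subseteq V$ as the smallest set of agents whose past interactions could have affected the current state $C_t(v)$: we set $I_0(v) = \{v\}$ for every $v$, and at each step $\Gamma_t = (u_t, w_t)$ we update $I_{t+1}(u_t) = I_{t+1}(w_t) = I_t(u_t) \cup I_t(w_t)$, leaving all other influencer sets unchanged. The point is that $C_t(v)$ is a deterministic function of the initial state $\sinit$ and the \emph{sequence} of interactions among agents in $I_t(v)$; agents outside $I_t(v)$ provably have no causal path to $v$ by step $t$. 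I would first verify this causal-independence property by induction on $t$, which is the conceptual backbone of the argument.

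Next I would bound how fast $|I_t(v)|$ can grow. This is exactly the dynamics of one-way epidemic spreading analyzed by Angluin et al.~\cite{fast}: an agent $v$ is ``infected'' if it lies in the influencer set of some fixed target, and each interaction that touches an infected agent can at most double the infected population (in fact adds at most the size of the other agent's influencer set). The key quantitative claim is that for a fixed target agent $v$, the number of influencers $|I_t(v)|$ remains below $\lceil n^{2/3} \rceil$ until $t = \Omega(n \ln n)$ steps have elapsed, with probability $1-o(1)$. I would establish this by the standard epidemic analysis: the expected time for the infected set to grow from size $k$ to size $k+1$ is $\Theta(n / k)$ (since a ``useful'' interaction requires pairing an infected with an uninfected agent, which has probability $\Theta(k(n-k)/n^2)$), so reaching size $\Theta(n^{2/3})$ takes $\sum_{k=1}^{n^{2/3}} \Theta(n/k) = \Theta(n \ln n)$ steps in expectation, and a concentration argument (Chernoff/martingale bounds on the sums of geometric-like waiting times) upgrades this to a with-high-probability statement. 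Taking $c$ small enough, $|I_{\lfloor c n \ln n\rfloor}(v)| < \lceil n^{2/3}\rceil$ holds whp for a single $v$, hence (by a union bound, or by noting we only need one target) for a designated agent.

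The crux of the argument is then to leverage \emph{indistinguishability}. Fix a target agent $v$ and consider the event that $v$ is declared the unique leader by step $t_0 = \lfloor c n \ln n \rfloor$. Because $C_{t_0}(v)$ depends only on the interactions internal to the small set $I_{t_0}(v)$, I would argue by a symmetry/coupling argument that the protocol cannot reliably single out $v$ as \emph{the} leader: there are many agents $u$ whose influencer sets are disjoint from (or symmetric to) that of $v$, and the schedule restricted to $I_{t_0}(v)$ carries no information distinguishing $v$ from these symmetric counterparts. Formally, I expect to construct an alternate schedule, or an automorphism of the interaction history, that permutes $v$ with another low-influence agent while preserving the distribution of $\sch$; under this relabeling the configuration is equally likely to declare the counterpart as leader, so the probability that the configuration is \emph{safe} with a unique stable leader must be bounded away from $1$. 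Since safety requires the leader to be fixed thereafter, any configuration reached before the influencer sets have grown large enough to break symmetry cannot be safe with probability $1-o(1)$.

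The main obstacle I anticipate is the last step: converting the ``small influencer set'' fact into a genuine lower bound on \emph{stabilization}, as opposed to merely election. Safety in $\calsple$ requires not just that exactly one leader exists now but that no agent ever changes its output afterward; an adversary-free random schedule could in principle reach a configuration that happens to have a unique leader early even if that leader is not yet causally determined. The delicate point is to show that such a configuration cannot be \emph{safe}: if $C_{t_0}(v)$ is the unique leader but $v$'s state was determined by fewer than $n^{2/3}$ influencers, then there must exist a reachable continuation (or a symmetric alternate world consistent with the same observable history) in which a second leader appears or the output of some agent changes, contradicting membership in $\calsple$. Making this indistinguishability-to-non-safety implication rigorous -- carefully handling the randomness of $\sch$ and ensuring the symmetric counterpart configuration is itself reachable with comparable probability -- is where the real work lies, and I would budget most of the proof for constructing the coupling and the union bound that together force the stabilization probability below $1-o(1)$ before $c \ln n$ parallel time.
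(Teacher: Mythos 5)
Your scaffolding---the influencer sets, their causal-independence property, and the epidemic-style $\Omega(n\ln n)$ whp bound before an influencer set reaches size $n^{2/3}$---matches the paper's Observations and its Lemma 3. The gap is in the crux step, and the route you propose there would not close it. A relabeling/coupling argument that ``permutes $v$ with another low-influence agent while preserving the distribution of $\sch$'' can only show that every agent is equally likely to be the elected leader; this is perfectly consistent with a protocol that elects a unique (uniformly random) leader in $o(\log n)$ parallel time and then stabilizes, so no contradiction follows. Indistinguishability of the leader's \emph{identity} says nothing about whether exactly one leader exists and remains stable, and this is precisely why the folklore bound is nontrivial, as the paper itself stresses in the introduction. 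You correctly flag this conversion as ``where the real work lies,'' but the proposal does not contain the idea that does that work.

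The paper's argument at this step is not probabilistic at all: it exploits the fact that safety quantifies over \emph{all} schedules, including adversarial continuations. First, a coupon-collector-style lemma (the paper's Lemma 1, absent from your proposal) shows that whp at least $n^{2/3}$ agents remain in state $\sinit$ for $\Omega(n\ln n)$ steps. Hence, assuming for contradiction that stabilization occurs in $o(n \ln n)$ steps whp, the two events intersect with positive probability, so there \emph{exists} a safe configuration $C$ in which at least $n^{2/3}$ agents are still in state $\sinit$ (Corollary 2). Safety of $C$ means that no continuation whatsoever---in particular, no sequence of interactions exclusively among those $\sinit$-agents---can ever create a new leader. Now your causal-independence property is used in the opposite direction from your symmetry argument: if in any execution some agent $v$ becomes a leader at step $t$ with $|F(v,t)| \le n^{2/3}$, then the interactions internal to $F(v,t)$, a set of agents that all start in $\sinit$, can be replayed verbatim on the $\sinit$-agents of $C$, creating a second leader reachable from $C$---contradicting safety. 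So leader creation forces $|F(v,t)| > n^{2/3}$ (the paper's Lemma 2), and your growth bound then finishes the proof. Note finally that this entire argument presupposes $\outputs(\sinit)=F$; the case $\outputs(\sinit)=L$ must be dispatched separately (it follows from the coupon-collector lemma alone, since $n-1$ agents must leave $\sinit$ to become followers), a case your proposal never addresses.
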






We prove Proposition \ref{prop:goal}
in the rest of this section.
First, we prove the following lemma
in a similar way as a standard analysis
of the coupon collector's problem.
\begin{lemma}
\label{lem:sublinear}
Let $\epsilon$ be any (small) positive constant
and $f(n)$ be any function such that
$f(n)=O(n^{1-\epsilon})$.
There exists some constant $c$ such that
execution $\Xi$ requires at least $cn \ln n$ steps
with probability $1-o(1)$
to reach a configuration where less than $f(n)$ agents
are in state $\sinit$.

\end{lemma}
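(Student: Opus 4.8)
The plan is to reduce the statement to a coupon-collector/occupancy estimate. The key structural observation is that an agent can leave state $\sinit$ only by participating in an interaction: if $v$ is neither initiator nor responder during steps $0,\dots,t-1$, then $C_t(v)=\sinit$. Call such a $v$ \emph{inexperienced at step $t$}. Thus the number of agents in state $\sinit$ at step $t$ is at least the number of inexperienced agents, and this latter count is non-increasing in $t$. Consequently, to lower-bound the first step $\tau$ at which fewer than $f(n)$ agents are in $\sinit$, it suffices to show that at the single step $T=\lfloor cn\ln n\rfloor$ at least $f(n)$ agents are still inexperienced with probability $1-o(1)$; monotonicity then forces $\tau>T$, since at every earlier step there are at least as many inexperienced agents.

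Next I would estimate the first moment. Since $\Pr(\Gamma_t=(u,v))=1/(n(n-1))$, a fixed agent $v$ participates in a given step with probability exactly $2/n$, and distinct steps are independent. Hence the probability that $v$ is inexperienced at step $T$ equals $(1-2/n)^{T}$, which for $T=\lfloor cn\ln n\rfloor$ is $n^{-2c}(1-o(1))$. Writing $Y$ for the number of inexperienced agents at step $T$, linearity gives $\ex[Y]=n(1-2/n)^{T}\ge n^{1-2c}(1-o(1))$. I would then fix $c$ to be any constant with $2c<\epsilon$ (for instance $c=\epsilon/4$), so that $\ex[Y]=\omega(n^{1-\epsilon})$ and therefore $\ex[Y]\ge 2f(n)$ for all sufficiently large $n$.

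Finally I would establish concentration by the second-moment method. Writing $Y=\sum_{v\in V}X_v$, where $X_v$ is the indicator that $v$ is inexperienced at step $T$, the crux is to bound $\var(Y)$. I claim the $X_v$ are non-positively correlated: conditioning on a fixed agent $u$ never participating confines every interaction to the smaller pool $V\setminus\{u\}$, which only makes any other agent $v$ more likely to participate, so $\ex[X_uX_v]\le\ex[X_u]\ex[X_v]$. Concretely this follows by comparing $\bigl((n-2)(n-3)/(n(n-1))\bigr)^{T}$, the probability that both $u$ and $v$ are avoided at every step, with $\bigl((n-2)/n\bigr)^{2T}$. Hence $\var(Y)\le\sum_{v}\var(X_v)\le\ex[Y]$, and Chebyshev's inequality gives $\Pr\bigl(Y<\ex[Y]/2\bigr)\le 4/\ex[Y]=o(1)$. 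Since $\ex[Y]/2\ge f(n)$ for large $n$, we conclude $Y\ge f(n)$ with probability $1-o(1)$, which is exactly what the reduction in the first paragraph needs.

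The main obstacle is the concentration step rather than the expectation: the participation events of distinct agents are dependent, so $Y$ is not a sum of independent Bernoulli variables and a Chernoff bound does not apply directly. The resolution is the negative-correlation observation above, which validates the sum-of-variances bound and lets plain Chebyshev suffice; verifying that covariance inequality is the only genuinely computational part of the argument.
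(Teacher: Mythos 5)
Your proposal is correct, but it takes a genuinely different route from the paper. The paper works with waiting times: after assuming (w.l.o.g.) that no transition returns an agent to $\sinit$, it decomposes the time until fewer than $f(n)$ agents remain in $\sinit$ as a sum of independent geometric random variables $X_i$ (the time for the count of $\sinit$-agents to drop below level $i$), stepping the index by $2$ because a single interaction can remove two $\sinit$-agents at once, and then applies Chebyshev to that total time, getting expectation $\Omega(n\log n)$ and variance $O(n^2)$. You instead fix the horizon $T=\lfloor cn\ln n\rfloor$ and count, via per-agent indicators, how many agents are still \emph{inexperienced} (have never participated) at step $T$; since inexperienced agents form a subset of the $\sinit$-agents and their number is monotone, this cleanly sidesteps both the paper's w.l.o.g. modification and the parity workaround. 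The price you pay is that your indicators are dependent, so you must verify pairwise negative correlation --- your inequality $(n-2)(n-3)/(n(n-1))\le\bigl((n-2)/n\bigr)^2$ is equivalent to $n(n-3)\le(n-1)(n-2)$, which holds --- whereas the paper's geometric variables are independent and need no such check. In exchange, your argument yields a polynomially small failure probability $O(1/\ex[Y])=O(n^{-(1-2c)})$, strictly stronger than the paper's $O(1/\log^2 n)$, and both suffice for the stated $1-o(1)$ bound. One cosmetic note: your constant $c$ depends on $\epsilon$ (e.g., $c=\epsilon/4$), but so does the paper's, and the lemma only asserts existence of some constant, so this is harmless.
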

\begin{proof}
Without loss of generality,
we assume that an agent never gets state $\sinit$
once it has an interaction.
(A transition going back to $\sinit$ just increases
the probability that $\Xi$ requires $\Omega(n \log n)$
steps to reach a configuration with less than $f(n)$ agents
in state $\sinit$.)
Consider a configuration that
exactly $i$ agents are in $\sinit$.
Then, at least one of the $i$ agents
has an interaction and leaves state $\sinit$
in the next step with probability
$p_i=\frac{_iC_2 + i(n-i)}{_n C_2}=\frac{i(2n-i-1)}{n(n-1)}$.
Let $X_i$ be a geometric random variable 
with parameter $p_i$,
that is, the number of coin flips until it lands on heads
where the coin lands on heads with probability $p_i$ in each flip.
Let $X=\sum_{i \in \{f^*(n),f^*(n)+2,\dots,n\}}X_i$
where $f^*(n) = 2 \lceil f(n)/2 \rceil$.
Since $p_{n}=p_{n-1}=1$ and $p_i$ is monotonically
increasing in $i\in[0,n-1]$,
for any integer $a$,
the probability that $\Xi$ requires at least $a$ steps
to reach a configuration with less than $f(n)$ agents
in state $\sinit$ is lower bounded by $\Pr(X \ge a)$.
Thus, it suffices to show $\Pr(X \ge c n\log n) = 1-o(1)$
for some constant $c$.

In what follows,
we analyze the expectation and the variance of $X$
and then obtain $\Pr(X \ge c n\log n) = 1-o(1)$
by Chebyshev's inequality.
We obtain the lower bounds of the expectation
and the variance as follows:
\begin{align*}
 \ex[X]&=
\sum_{i \in \{f^*(n),f^*(n)+2,\dots,n\}} \frac{1}{p_i}
\ge \sum_{i \in \{f^*(n),f^*(n)+2,\dots,n\}} \frac{n}{2i}
=
\Omega\left(n \log \frac{n}{f(n)}\right) = \Omega (n\log n^{\epsilon}) = \Omega(n \log n),\\
\var[X]&=\sum_{i \in \{f^*(n),f^*(n)+2,\dots,n\}} \frac{1-p_i}{p_i^2}\le \sum_{i=1,2,\dots,n}\frac{1}{p_i^2}
\le \sum_{i=1,2,\dots,\infty}\frac{n^2}{i^2}= \frac{\pi^2 n^2}{6}
< 2n^2,
\end{align*}
where we use $1^2+2^2+3^2+\dots = \pi^2/6$ for the last equality.
Let $d$ be a constant such that $\ex[X] \ge d n\ln n$ holds
for any sufficiently large $n$.
Then, by Chebyshev's Inequality, we obtain
\begin{align*}
\Pr\left(X \le \frac{dn \ln n}{2}\right) \le \Pr\left(X \le E[X]-\frac{dn \log n}{2}\right)
\le \frac{4\var[X]}{(dn \ln n)^2}=O(1/\log^2 n).
\end{align*}
Thus, we have $\Pr(X > dn \ln n /2)=1-O(1/\log^2 n)=1-o(1)$.
\end{proof}

\begin{col}
\label{col:initialleader}
Proposition \ref{prop:goal} holds
if the initial output of $P$ is $L$,
\ie $\outputs(\sinit)=L$.
\end{col}

In the rest of this section,
we assume $\outputs(\sinit)=F$.
Recall that a configuration $C$ of $P$ is safe
if and only if
there exists exactly one leader in $C$
and no agent can change its output in
an execution after $C$.
In what follows, we use Lemma \ref{lem:sublinear}
by letting $f(n)=n^{2/3}$ while 
the lemmas and corollaries in the rest of this section
hold for more general $f(n)=O(n^{1-\epsilon})$.
The following corollary immediately follows from Lemma \ref{lem:sublinear}.

\begin{col}
\label{col:initlimit}
Suppose that Proposition \ref{prop:goal} does not hold,
that is, the parallel stabilization time of $P$
is less than $c \ln n$
with probability $1-o(1)$
for any constant $c$.
Then, there exists some safe configuration of $P$
where at least $n^{2/3}$ agents are in state $\sinit$.
\end{col}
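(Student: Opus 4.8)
The plan is to combine the two high-probability guarantees at our disposal into a single high-probability event and then observe that such an event must be non-empty, yielding the required configuration. First I would invoke Lemma~\ref{lem:sublinear} with $f(n)=n^{2/3}$, which is legitimate since $n^{2/3}=O(n^{1-\epsilon})$ for $\epsilon=1/3$. This produces a constant $c_0$ such that, with probability $1-o(1)$, execution $\Xi$ requires at least $c_0 n \ln n$ steps to reach any configuration in which fewer than $n^{2/3}$ agents are in state $\sinit$. Call this event $A$; equivalently, on $A$ every configuration occurring strictly before step $c_0 n \ln n$ still has at least $n^{2/3}$ agents in state $\sinit$. Next I would apply the hypothesis that Proposition~\ref{prop:goal} fails, instantiated at the specific constant $c=c_0$: by assumption the parallel stabilization time of $P$ is less than $c_0 \ln n$ with probability $1-o(1)$, so with probability $1-o(1)$ the execution $\Xi$ reaches some configuration in $\calsple$ within fewer than $c_0 n \ln n$ steps. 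Call this event $B$.

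The key step is a union bound on the complements, giving $\Pr(A\cap B)\ge 1-\Pr(\overline{A})-\Pr(\overline{B})=1-o(1)$, which is strictly positive for all large $n$. On the event $A\cap B$ I would extract the desired configuration directly: on $B$ the execution reaches some safe configuration $C_t$ with $t<c_0 n \ln n$, and on $A$ that same step $t$ occurs strictly before the first step at which the number of $\sinit$-agents drops below $n^{2/3}$, so $C_t$ still contains at least $n^{2/3}$ agents in state $\sinit$. Thus on $A\cap B$ the execution passes through a safe configuration retaining at least $n^{2/3}$ initial-state agents.

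Finally, since $\Pr(A\cap B)>0$, at least one schedule realizes $A\cap B$, and the corresponding $C_t$ is a reachable configuration of $P$ that lies in $\calsple$ and has at least $n^{2/3}$ agents in $\sinit$, which is exactly the assertion of the corollary. I expect the main point requiring care to be the ordering of the two hitting times rather than any hard estimate: I must ensure the safe configuration is reached \emph{strictly before} the count of $\sinit$-agents first falls below $n^{2/3}$, so that the safe configuration itself still holds those $n^{2/3}$ agents. This alignment is precisely what is achieved by comparing both bounds against the common threshold $c_0 n \ln n$ after fixing $c=c_0$, and it is why the proof uses a single shared constant for both events.
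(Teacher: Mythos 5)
Your proof is correct and is exactly the argument the paper leaves implicit when it says the corollary ``immediately follows'' from Lemma~\ref{lem:sublinear}: instantiate the lemma at $f(n)=n^{2/3}$, instantiate the failure of Proposition~\ref{prop:goal} at the same constant, intersect the two high-probability events, and extract a safe configuration reached while at least $n^{2/3}$ agents are still in state $\sinit$. Your attention to the ordering of the two hitting times against the common threshold $c_0 n\ln n$ is precisely the detail that makes the ``immediate'' step rigorous.
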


\begin{figure} 
\centering
\begin{minipage}{0.45\hsize}
\centering
\includegraphics[width= 0.9\linewidth,clip]{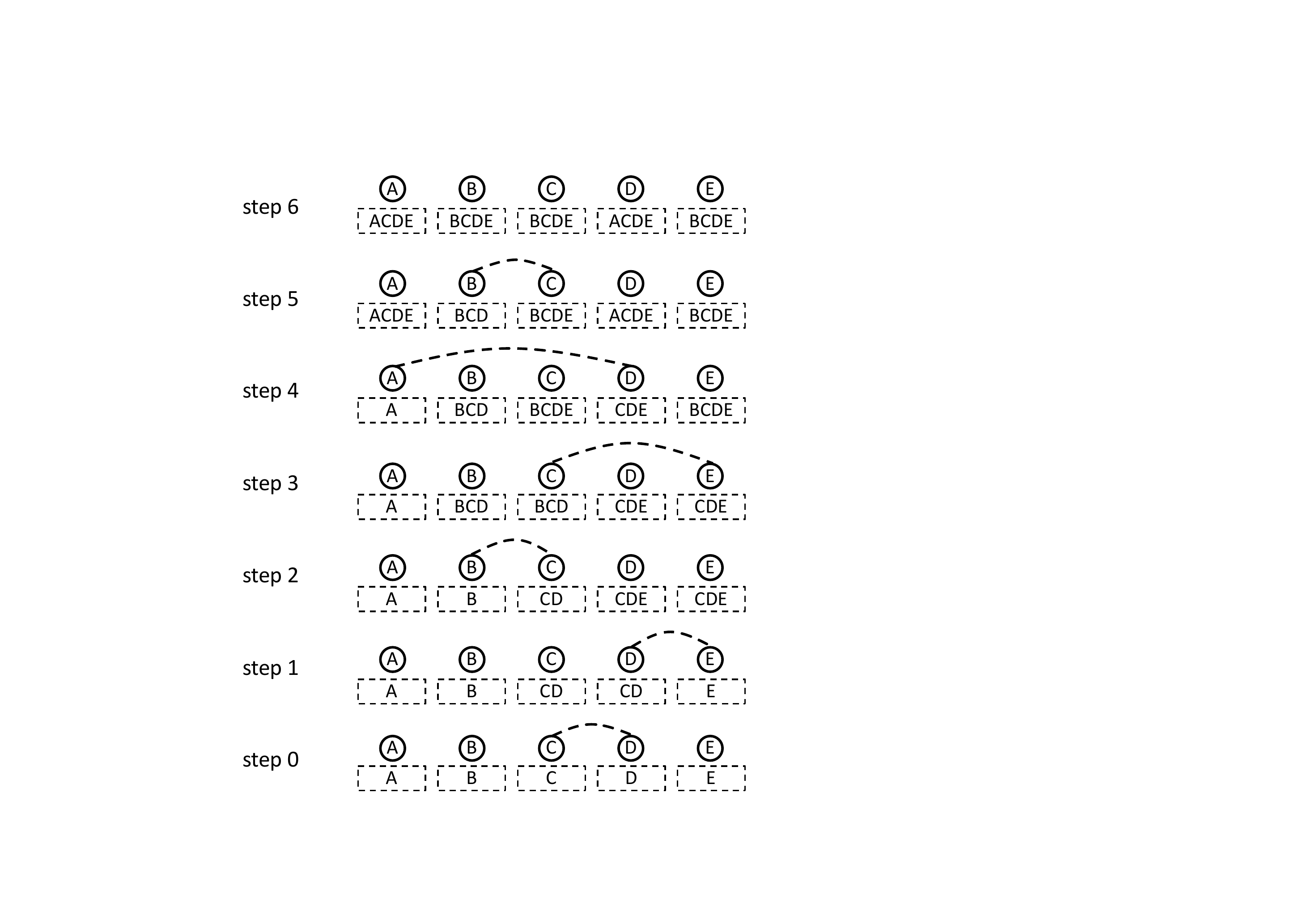} 
\caption{An example of the influencers. The circles represent the agents.
The dashed lines represents the interactions in steps 0, 1, \dots, 5.
The box below each circle
represents the set of influencers of the corresponding agent
at each step.
} 
\label{fig:influencer} 
\end{minipage}
\begin{minipage}{0.03 \hsize}
\hspace{0.01cm}
\end{minipage}
\begin{minipage}{0.45\hsize}
\centering
\vspace{-0.85cm}
\includegraphics[width= 0.68\linewidth,clip]{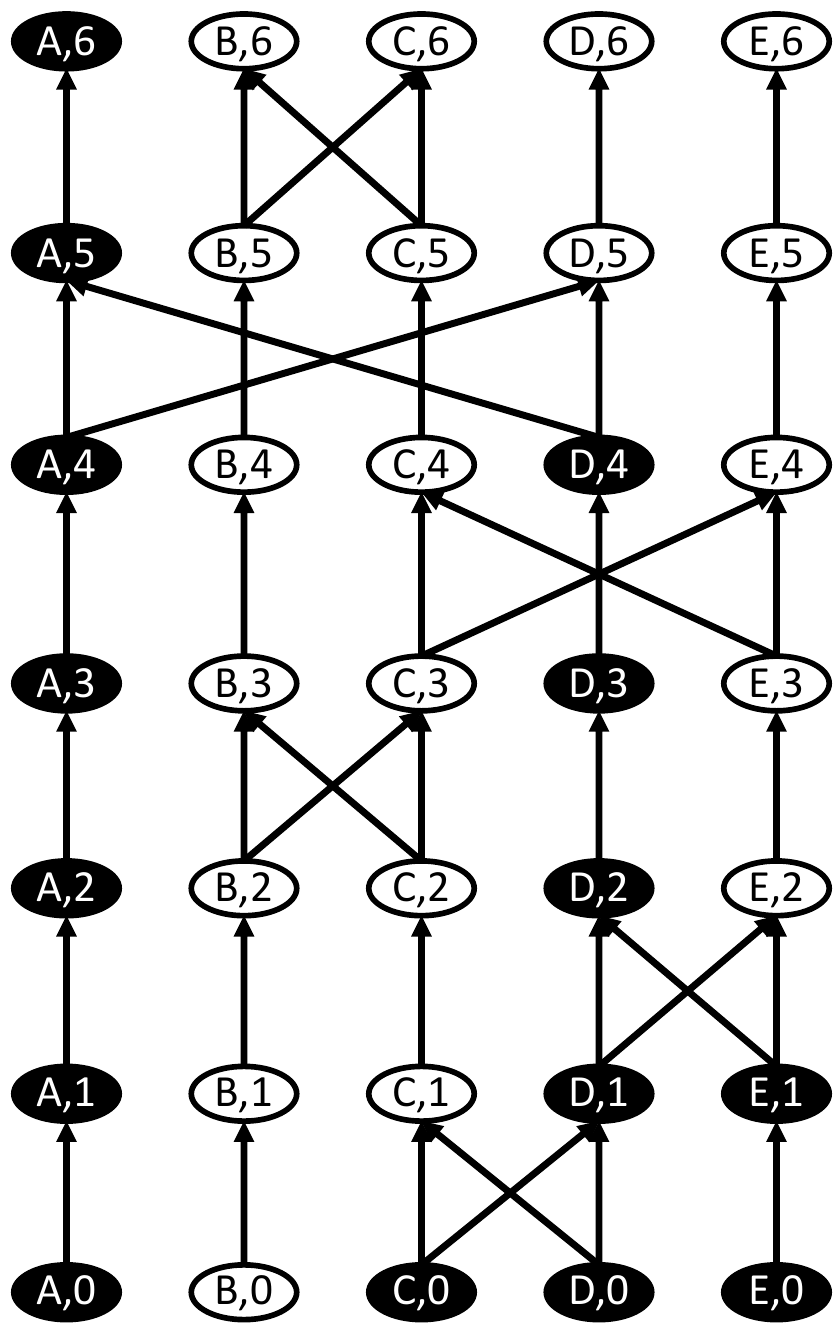} 
\caption{The graph $H$ that corresponds to the interaction sequence in Figure \ref{fig:influencer}. The black ellipses represent the nodes from which $(A,6)$ is reachable.} 
\label{fig:graph_H}  
\end{minipage}
\end{figure} 

Corollary \ref{col:initlimit} implies that
an execution of $P$ must involve more than $n^{2/3}$ agents
to create a new leader
if Proposition \ref{prop:goal} does not hold.
This is because otherwise
an execution of $P$ creates a new leader
with only interactions involving only at most $n^{2/3}$ agents,
a contradiction to the existence of a safe configuration
with at least $n^{2/3}$ agents in state $\sinit$.
In what follows, we elaborate this proposition
as Lemma \ref{lem:influencer} after introducing the notion of
\emph{influencer}.
The set of influencers of agent $v$ at step $0$, denoted
by $F(v,0)$, is only $\{v\}$.
Thereafter, the influencers of agent $v$
is expanded every time it has an interaction with another agent.
Specifically,  
for $i > 0$,
$F(v,i)=F(v,i-1) \cup F(u,i-1)$ if
$v$ has an interaction with an agent $u$ at step $i$,
that is, if there exists
agent $u \in V$ such that 
$\Gamma_{i-1} =(u,v)$ or $\Gamma_{i-1}=(v,u)$.
Otherwise, $F(v,i)=F(v,i-1)$.
See Figure \ref{fig:influencer} that depicts the set of influencers where the population consists of five agents $\{A,B,C,D,E\}$.
In this example, by the interactions at steps 0, 1, \dots, 5, the set of the influencers of agent $A$
expands from $\{A\}$ to $\{A,C,D,E\}$.
We can represent $F(v,t)$ more intuitively. 
Consider the directed graph $H=(V_H,E_H)$
where $V_H=\{(u,i) \mid u \in V, i=0,1,\dots,t\}$
and $E_H$ is defined as follows:
\begin{align*}
E_H=&\{((u,i),(u,i+1))\} \mid u \in V, i=0,1,\dots,t-1\}\\
&\cup \{((u,i),(w,i+1))\} \mid u,w \in V, i=0,1,\dots,t-1, (u,w) \in \Gamma_i \vee (w,u) \in \Gamma_i\}. 
\end{align*}
(See Figure \ref{fig:graph_H} for
the graph $H$ that corresponds to the example
of Figure \ref{fig:influencer}.)
It is obvious that
a node $u$ belongs to $F(v,t)$
if and only if
node $(v,t)$ is reachable from node $(u,0)$
in graph $H$.

\begin{lemma}
\label{lem:influencer}
If
Proposition \ref{prop:goal} does not hold,
an execution of $P$ never reaches a safe configuration
before the number of influencers of some agent becomes
greater than $n^{2/3}$,
that is, $C_t$ is a safe configuration
only if $|F(v,t)| > n^{2/3}$ holds for some $v \in V$.
\end{lemma}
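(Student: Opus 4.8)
The plan is to argue by contradiction, combining Corollary \ref{col:initlimit} with the observation that the influencers of an agent capture exactly those agents whose initial states could have affected its current state. Suppose Proposition \ref{prop:goal} does not hold. By Corollary \ref{col:initlimit}, there exists a safe configuration $C^*$ of $P$ in which at least $n^{2/3}$ agents are in state $\sinit$. I would then suppose, for contradiction, that at some step $t$ the configuration $C_t$ is safe while $|F(v,t)| \le n^{2/3}$ holds for every $v \in V$, and derive a contradiction with the safety of $C^*$.

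The key structural fact I would use is that the state of agent $v$ at step $t$ is a deterministic function of the initial states of exactly the agents in $F(v,t)$, together with the relevant portion of the schedule. This is precisely what the graph-$H$ reachability characterization gives us: $C_t(v)$ depends only on $\{C_0(u) \mid u \in F(v,t)\}$ and on the interactions recorded along the paths reaching $(v,t)$. Hence, if I freeze the schedule $\Gamma_0,\dots,\Gamma_{t-1}$ that produced $C_t$, I can reason about which initial configurations are consistent with the observed trajectory. The goal is to show that a safe configuration cannot arise while every agent's influencer set is small.

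The core argument is a ``swapping'' or reachability construction. Starting from $C^*$, in which at least $n^{2/3}$ agents sit in $\sinit$, I would replay the exact interaction sequence $\gamma = \Gamma_0,\dots,\Gamma_{t-1}$ that led to $C_t$ in $\Xi$, but restricting attention to the union $\bigcup_{v} F(v,t)$ of all influencer sets that participate. Since each $|F(v,t)| \le n^{2/3}$ and we began (in $\Xi$) from $\cinit{P}$ where all agents are in $\sinit$, the agents outside these influencer sets are indistinguishable from the $\sinit$-agents available in $C^*$. Concretely, I would build a configuration reachable from $C^*$ by the same sequence of interactions but mapped onto agents, using the abundant $\sinit$-agents in $C^*$ to play the roles of the influencers that were in $\sinit$; because at most $n^{2/3}$ agents are involved and $C^*$ supplies at least $n^{2/3}$ fresh $\sinit$-agents, there are enough agents to carry out the replay. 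This reachability from a safe configuration would then create a second leader (or otherwise force an output change), contradicting the definition of $\calsple$, since a safe configuration forbids any output change along any schedule.

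The main obstacle I anticipate is making the replay/embedding precise: I must simultaneously match up the participating agents, ensure the new leader created is genuinely distinct from the unique leader already present in $C^*$, and verify that only at most $n^{2/3}$ agents are touched so that the $\sinit$-supply in $C^*$ suffices. The anonymity of agents is what makes this embedding legitimate, since relabeling agents that are in identical states yields a configuration reachable by an appropriately relabeled schedule. The delicate bookkeeping is to confirm that the interaction creating the new leader in $\Xi$ lies within the small influencer set and therefore can be reproduced from $C^*$ without disturbing the pre-existing leader, thereby violating the no-output-change property of $\calsple$ and completing the contradiction.
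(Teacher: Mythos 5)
Your proposal is correct and follows essentially the same route as the paper: invoke Corollary \ref{col:initlimit} to obtain a safe configuration rich in $\sinit$-agents, then replay the leader's creation history inside its small influencer set on those fresh $\sinit$-agents to manufacture a second leader, contradicting safety. One precision your write-up needs (and your closing paragraph effectively supplies): the replay must be confined to $F(v^*,t)$ for the single leader $v^*$ of $C_t$, not to $\bigcup_{v} F(v,t)$, since that union contains every agent (each $v$ lies in its own $F(v,t)$) and so has size $n$ rather than at most $n^{2/3}$.
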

\begin{proof}
Assume that Proposition \ref{prop:goal} does not hold.
Then, by Corollary \ref{col:initlimit},
there exists a safe configuration $C$ of $P$ such that
$m \ge n^{2/3}$ agents are in state $\sinit$.
Since $C$ is a safe configuration,
there is no sequence of interactions that leads to create
another leader starting from $C$.
This means that
we cannot create a new leader by interacting only $m$ agents
with state $\sinit$ even if we let them interact each other
infinitely many times.
Therefore, we require that the number of influencers
of some agent becomes greater than $m \ge n^{2/3}$
to create a new leader.
In other words, in execution $\Xi$,
an agent $v$ becomes a leader only at step $t$
such that $|F(v,t)| > n^{2/3}$.
The lemma holds because no leader exists in a configuration $C_0=\cinit{P}$
and thus $\Xi$ must create a leader to reach a safe configuration.
\end{proof}

By Lemma \ref{lem:influencer},
it suffices to show that
the expansion of influencers is not so fast
in order 
to prove Proposition \ref{prop:goal}.
More specifically, our goal is now to show that
$\Omega(n\log n)$ steps are needed until
some agent $v \in V$ satisfies $F(v,*) > n^{2/3}$.
Fortunately, the expansion of influencers
is symmetric to the expansion of the epidemic \cite{fast}
and can be analyzed similarly.
Let $t$ be any non negative integer.
We define a sequence of sets
$I_{v,t}(0), I_{v,t}(1), \dots, I_{v,t}(t) \in 2^V$
based on the digraph $H$
defined just above Lemma \ref{lem:influencer},
as follows:
for any $i = 0,1,\dots,t$,
a node $u \in V$ belongs to $I_{v,t}(i)$
if and only if $(v,t)$ is reachable from $(u,i)$ in $H$.
In the example of Figure \ref{fig:graph_H},
we have $I_{A,6}(6)=I_{A,6}(5)=\{A\}, I_{A,6}(4)= I_{A,6}(3)= I_{A,6}(2)=\{A,D\}, I_{A,6}(1)=\{A,C,D\}, I_{A,6}(0)=\{A,C,D,E\}$.
By definition, we have the following observation.
\begin{observation}
\label{observ:fandi}
Let $v \in V$ and $t \in \mathbb{N}_{\ge 0}$.
Then, we have $F(v,t)=I_{v,t}(0)$. 
\end{observation}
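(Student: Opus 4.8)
The plan is to show that both $F(v,t)$ and $I_{v,t}(0)$ describe exactly the same subset of $V$, namely the set of agents $u$ for which node $(v,t)$ is reachable from node $(u,0)$ in the digraph $H$. Since $I_{v,t}(i)$ is \emph{defined} by this reachability condition, taking $i=0$ already identifies $I_{v,t}(0)$ with that reachability set; it therefore suffices to prove that $F(v,t)$ coincides with it as well. This is precisely the fact the text calls ``obvious'' just above Lemma~\ref{lem:influencer}, and I would prove it rigorously by induction on $t$.

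For the base case $t=0$, note that $F(v,0)=\{v\}$ by definition, and since every edge of $H$ goes from some level $i$ to level $i+1$, the only level-$0$ node reachable from $(u,0)$ is $(u,0)$ itself (via the zero-length path). Hence $(v,0)$ is reachable from $(u,0)$ if and only if $u=v$, so the reachability set equals $\{v\}$ as well, matching $F(v,0)$.

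For the inductive step I would analyze the in-neighbours of the node $(v,t+1)$ in $H$. By the two families of edges in $E_H$, the in-neighbours of $(v,t+1)$ are exactly $(v,t)$ (from the ``vertical'' edge $((v,t),(v,t+1))$) together with $(u,t)$ in the case where $\Gamma_t$ is an interaction between $v$ and some agent $u$. Consequently $(v,t+1)$ is reachable from $(a,0)$ if and only if $(v,t)$ is reachable from $(a,0)$, or (when such a $u$ exists) $(u,t)$ is reachable from $(a,0)$. Writing $R(w,s)$ for the reachability set of $(w,s)$, this gives $R(v,t+1)=R(v,t)\cup R(u,t)$ when $\Gamma_t$ involves $v$ and $u$, and $R(v,t+1)=R(v,t)$ otherwise. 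This is the very recurrence that defines $F(v,t+1)$ from the level-$t$ influencers, so the induction hypothesis $F(w,t)=R(w,t)$ for all $w$ immediately yields $F(v,t+1)=R(v,t+1)$.

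The argument is essentially routine, so I expect no genuine obstacle; the only point requiring care is the indexing convention, since the recurrence for $F$ updates level $i$ using the interaction $\Gamma_{i-1}$ while the edges of $H$ join level $i$ to level $i+1$ via $\Gamma_i$. I would verify explicitly that these align (the interaction linking levels $t$ and $t+1$ is $\Gamma_t$, which is exactly the one governing $F(\cdot,t+1)$), and I would double-check the degenerate level-$0$ base case where reachability is witnessed only by the zero-length path. Once the reachability characterization of $F$ is established, the observation follows instantly, because $I_{v,t}(0)$ is defined to be that very reachability set.
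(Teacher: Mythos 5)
Your proposal is correct and follows the same route as the paper: the paper states the observation as immediate from the definitions, relying on the reachability characterization of $F(v,t)$ in the digraph $H$ that it declares ``obvious'' just before Lemma~\ref{lem:influencer}, and your induction on $t$ simply makes that characterization rigorous (with the indexing between $\Gamma_t$ and $F(\cdot,t+1)$ checked correctly). No gap; you have just filled in details the paper omits.
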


Let $i \in [0,t-1]$.
Note that
$I_{v,t}(i)$ is determined only by
interactions $\Gamma_{t-1}, \Gamma_{t-2},\dots,\Gamma_{i}$.
Hence, $I_{v,t}(i)$ depends on $I_{v,t}(i+1)$,
but $I_{v,t}(i+1)$ is independent of $I_{v,t}(i)$.
Suppose $|I_{v,t}(i+1)|=k$.
Then, $|I_{v,t}(i)|=k+1$ holds
if and only if one of the $k$ agents in $I_{v,t}(i+1)$
and one of the $n-k$ agents in $V \setminus I_{v,t}(i+1)$
interact at step $i$ (\ie in $\Gamma_i$).
Therefore, we have the following observation.
\begin{observation}
\label{observ:prob}
Let $v \in V$ and $t \in \mathbb{N}_{\ge 0}$.
Then, we have $0 \le |I_{v,t}(i)|-|I_{v,t}(i+1)| \le 1$
and $\Pr(|I_{v,t}(i)|=k+1 \mid |I_{v,t}(i+1)|=k) =\frac{2k(n-k)}{n(n-1)}$
for any integer $k=1,2,\dots,n$.
\end{observation}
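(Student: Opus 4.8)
The plan is to peel off the single interaction $\Gamma_i$ and reduce both claims to a deterministic description of how $I_{v,t}(i)$ is formed from $I_{v,t}(i+1)$, inserting the randomness of $\Gamma_i$ only at the last moment. From the construction of $H$, the edges between level $i$ and level $i+1$ are the identity edges $((u,i),(u,i+1))$ for every $u \in V$, together with, writing $\Gamma_i=(a,b)$, the two cross edges $((a,i),(b,i+1))$ and $((b,i),(a,i+1))$. Reading off reachability to $(v,t)$ through these edges, I would first extract the membership rule: $u \in I_{v,t}(i)$ if and only if $u \in I_{v,t}(i+1)$, or $u \in \{a,b\}$ and the partner of $u$ in $\Gamma_i$ lies in $I_{v,t}(i+1)$.

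With this rule in hand, the increment bound is immediate. The identity edges give $I_{v,t}(i+1) \subseteq I_{v,t}(i)$, so the difference is nonnegative, and the only agents that can possibly be new members of $I_{v,t}(i)$ are $a$ and $b$, since no other agent has a cross edge at step $i$. A case split on $|\{a,b\} \cap I_{v,t}(i+1)|$ then finishes the first claim: if both endpoints are already influencers, or neither is, no agent is added and the difference is $0$; if exactly one endpoint lies in $I_{v,t}(i+1)$, the other endpoint is added and no one else, so the difference is exactly $1$. This simultaneously establishes $0 \le |I_{v,t}(i)| - |I_{v,t}(i+1)| \le 1$ and identifies the event $\{|I_{v,t}(i)|=k+1\}$, conditioned on $|I_{v,t}(i+1)|=k$, with ``exactly one endpoint of $\Gamma_i$ lies in $I_{v,t}(i+1)$.''

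For the conditional probability I would invoke the independence noted just above the statement: $I_{v,t}(i+1)$ is a function of $\Gamma_{t-1},\dots,\Gamma_{i+1}$ only, hence is independent of $\Gamma_i$, so conditioning on the set $I_{v,t}(i+1)$ (and in particular on its size $k$) leaves $\Gamma_i$ uniform over the $n(n-1)$ ordered pairs of distinct agents. Counting the ordered pairs with exactly one endpoint in a fixed $k$-element set gives $k(n-k)+(n-k)k=2k(n-k)$, and dividing yields $\frac{2k(n-k)}{n(n-1)}$. The only step that needs genuine care, and hence the nearest thing to a main obstacle here, is a matter of rigor rather than depth: one must carry out the conditioning on the set-valued variable $I_{v,t}(i+1)$, whose law is untouched by $\Gamma_i$, rather than reason informally from the numerical event in a way that might smuggle in a spurious correlation with $\Gamma_i$. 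The measurability remark preceding the statement is precisely what legitimizes this, and everything else is elementary counting.
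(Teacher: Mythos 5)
Your proof is correct and follows essentially the same route as the paper, which justifies this observation in the paragraph immediately preceding it: the independence of $I_{v,t}(i+1)$ from $\Gamma_i$ (since $I_{v,t}(i+1)$ is determined only by $\Gamma_{i+1},\dots,\Gamma_{t-1}$), the characterization that the set grows by one exactly when $\Gamma_i$ pairs an agent inside $I_{v,t}(i+1)$ with one outside, and the count of $2k(n-k)$ ordered pairs out of $n(n-1)$. Your write-up merely makes explicit the membership rule read off from the graph $H$ and the tower-style conditioning on the set-valued variable, both of which the paper leaves implicit.
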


We show
that the above sufficient condition for Proposition \ref{prop:goal} holds, as the following lemma.

\begin{lemma}
\label{lem:growspeed}
Let $\tmin$ be the smallest integer
such that $|F(v,\tmin)| > n^{2/3}$ holds for some $v \in V$.
Then, there exists some constant $c$ such that
$\Pr(\tmin \ge cn \ln n) = 1-o(1)$.
\end{lemma}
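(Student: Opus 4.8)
The plan is to bound the probability that \emph{some} agent's influencer set has exceeded $n^{2/3}$ by the target time through a union bound over the $n$ agents, combined with a sharp single-agent tail estimate. Since $t \mapsto |F(v,t)|$ is monotonically non-decreasing, for $T=\lfloor cn\ln n\rfloor+1$ the event $\{\tmin < T\}$ coincides with $\bigcup_{v\in V}\{|F(v,T-1)| > n^{2/3}\}$, because for each fixed $v$ the existence of an earlier step at which the set exceeds $n^{2/3}$ is equivalent to $|F(v,T-1)| > n^{2/3}$. By symmetry of $\sch$ all agents have identically distributed influencer sizes, so it suffices to show that for a single fixed $v$ one has $\Pr(|F(v,T-1)| > n^{2/3}) = o(1/n)$; a union bound then gives $\Pr(\tmin < T) = o(1)$, which is the claim.

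For the single-agent estimate I would reformulate the growth of $|F(v,t)| = |I_{v,t}(0)|$ (Observation~\ref{observ:fandi}) as a pure-birth process. Reading $|I_{v,t}(t)|, |I_{v,t}(t-1)|,\dots,|I_{v,t}(0)|$ in this order, Observation~\ref{observ:prob} says the size starts at $1$, increases by at most $1$ per step, and jumps from $k$ to $k+1$ with probability $p_k = \frac{2k(n-k)}{n(n-1)}$; conditioned on the current size this depends only on the single fresh interaction, so the successive holding times are independent. Hence the first step $\tau$ at which the size exceeds $K := \lceil n^{2/3}\rceil$ satisfies $\{|F(v,t)| > n^{2/3}\} = \{\tau \le t\}$ with $\tau = \sum_{k=1}^{K} W_k$, a sum of \emph{independent} geometric variables $W_k$ of parameter $p_k$. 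Since $p_k = \tfrac{2k}{n}(1+o(1))$ uniformly for $k \le K = o(n)$, one checks $\ex[\tau] = \sum_k p_k^{-1} = \Theta(n\ln n)$, so for small $c$ the target $a := cn\ln n$ lies well below $\ex[\tau]$ and we seek a lower-tail bound.

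The crucial point is that this lower tail must be \emph{polynomially} small in $n$. Chebyshev is useless here: $\var[\tau] = \sum_k \frac{1-p_k}{p_k^2} = O(n^2)$ only yields $\Pr(\tau\le a)=O(1/\ln^2 n)$, which is fatal after multiplying by $n$. Instead I would use an exponential (Chernoff) bound. From $\ex[e^{-sW_k}] = \frac{p_k}{p_k + e^{s}-1}$, writing $u = e^s-1>0$ gives
\begin{align*}
\Pr(\tau \le a) \;\le\; (1+u)^a \prod_{k=1}^{K}\frac{p_k}{p_k+u},
\qquad\text{i.e.}\qquad
\ln\Pr(\tau\le a) \le a\ln(1+u) - \sum_{k=1}^{K}\ln\!\left(1+\tfrac{u}{p_k}\right).
\end{align*}
Choosing $u = \beta/n$ for a constant $\beta$, the first term is at most $a\cdot u = c\beta\ln n$, while $u/p_k \approx \beta/(2k)$ makes the sum at least $\tfrac{\beta}{2}\ln K - O(1) = \tfrac{\beta}{3}\ln n - O(1)$. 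This produces $\ln\Pr(\tau\le a) \le -\beta\bigl(\tfrac13 - c\bigr)\ln n + O(1)$; taking any $c<\tfrac13$ and then $\beta$ large enough (e.g.\ $c=\tfrac16$, $\beta=18$, giving $\le -3\ln n + O(1)$) yields $\Pr(\tau\le a)=O(n^{-3})=o(1/n)$, and the union bound closes the argument with $\Pr(\tmin < T)=O(n^{-2})$.

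The main obstacle is exactly this last step: extracting a polynomially-small tail from a sum of heterogeneous geometrics whose mean exceeds the target only by a constant factor. The calculation hinges on placing the Chernoff parameter at the right scale $u=\Theta(1/n)$ and on the matching $\Theta(\ln n)$ growth of both $a\ln(1+u)$ and $\sum_k\ln(1+u/p_k)$, the useful slack being the positive constant $\tfrac13 - c$. Some care is needed to make the uniform estimate $p_k = \tfrac{2k}{n}(1+o(1))$ and the lower bound $\sum_{k\le K}\ln(1+u/p_k) \ge \tfrac{\beta}{3}\ln n - O(1)$ hold over all $k\le K$, but these are routine since $K = n^{2/3}=o(n)$.
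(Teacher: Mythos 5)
Your proof is correct, and it arrives at the same reduction as the paper's: fix one agent $v$, use Observations \ref{observ:fandi} and \ref{observ:prob} to identify the law of $|F(v,t)|$ with that of a pure-birth process whose waiting times are independent geometrics with parameters $p_k=\frac{2k(n-k)}{n(n-1)}$, bound the lower tail of $\sum_{k\le K}W_k$ with $K=\lceil n^{2/3}\rceil$, and finish with a union bound over the $n$ agents. Where you genuinely diverge is in how that lower tail is made polynomially small. The paper partitions the $K$ geometrics into $\kappa=O(n^{1/6})$ blocks of $r=\lfloor\sqrt{n}\rfloor$ consecutive terms, argues by pigeonhole that if the whole sum is at most $\lfloor c_v n\ln n\rfloor<E'$ then some block must fall below its share $\lfloor\frac{r}{2}\ex[X_{(i+1)r}]\rfloor$, converts each per-block event into an upper tail of a binomial via the geometric-sum/binomial duality (using monotonicity of $p_k$ within a block), and applies the multiplicative Chernoff bound block by block. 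You instead bound the Laplace transform of the entire heterogeneous sum, $\Pr(\tau\le a)\le(1+u)^a\prod_{k\le K}\frac{p_k}{p_k+u}$, and tune $u=\beta/n$ so that the cost $a\ln(1+u)\le c\beta\ln n$ is dominated by the gain $\sum_{k\le K}\ln(1+u/p_k)\ge\frac{\beta}{3}\ln n-O(1)$; taking $c<1/3$ and $\beta$ large gives $O(n^{-3})$ per agent and $O(n^{-2})$ overall, matching (indeed slightly improving) the paper's bound. Your route is more direct and quantitatively cleaner: it dispenses with the block bookkeeping (including the paper's footnote about the leftover segment and the within-block replacement of each mean by the block's extreme term) and yields explicit constants; what the paper's route buys is that it invokes only the off-the-shelf binomial Chernoff bound, with no moment-generating-function computation for geometric variables. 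One point you should phrase more carefully: the processes $I_{v,t}(\cdot)$ for different horizons $t$ are built from the same interactions read over different windows, so there is no single stopping time $\tau$ for which $\{|F(v,t)|>n^{2/3}\}=\{\tau\le t\}$ holds simultaneously for all $t$; what your argument actually needs (and what is true) is the distributional identity $\Pr(|F(v,t)|>n^{2/3})=\Pr(\sum_{k\le K}W_k\le t)$ at the single fixed $t=T-1$, combined with the monotonicity of $t\mapsto|F(v,t)|$, which you already use correctly to reduce $\{\tmin<T\}$ to events at time $T-1$.
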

\begin{proof}
Let $v$ be any agent in $V$.
In what follows,
we show $\Pr(|F(v,\cnlogn)|> \threshold)=O(n^{-2})$
for some constant $c_v$.
This yields $\Pr(\tmin < cn\ln n) = O(n^{-1})=o(1)$
by the union bounds
where $c= \min \{c_v \mid v \in V\}$.

By Observation \ref{observ:fandi},
$F(v,\cnlogn)=I_{v,\cnlogn}(0)$ holds.
Therefore,
letting $X_k$ be a geometric random variable
with parameter $p_k= \frac{2k(n-k)}{n(n-1)}$
and $S_{i,j}=\sum_{i \le k \le j} X_k$, 
we obtain the following inequality
by Observation \ref{observ:prob}:
\begin{align*}
\Pr(|F(v,\cnlogn)| > \threshold)=\Pr(|I_{v,\cnlogn}(0)|>\threshold)=\Pr(S_{1,\threshold} \le \cnlogn).
\end{align*}
Let $r = \lfloor \sqrt{n} \rfloor$ and
$\kappa = \lfloor \threshold / r \rfloor$.
To make use of Chernoff bounds, 
we divide $S_{1,\threshold}$ to $\kappa=O(n^{1/6})$ groups,
$S_{1,r},S_{r+1,2r},\dots,S_{(\kappa-1) r+1,\kappa r}$.
\footnote{
We ignore the last segment $S_{\kappa r+1,\threshold}$
when $\threshold$ is not divisible by $r$.
This ignorance only increase the error probability
and thus does not ruin the proof, as we will see in the last
sentence of the proof of this lemma.
}
Rename $S'_i=S_{ir+1,(i+1)r-1}$ for any $i=0,1,\dots,\kappa-1$.
While $k\le n/2$, probability $p_k$ is monotonically increasing.
Since $\threshold \ll n/2$ holds for sufficiently large $n$,
we can assume $\ex[X_1]>\ex[X_2]> \cdots >\ex[X_{\threshold}]$.
Thus, letting $B(l,p)$ be a binomial random variable
with parameters $l$ and $p$,
where $l$ is the number of trials and $p$ is the success probability, we have
\begin{align*}
\Pr\left (S'_i \le \left\lfloor\frac{r}{2}\cdot \ex[X_{(i+1)r}]
\right \rfloor \right)
&<
\Pr\left (B\left(\left\lfloor\frac{r}{2}\cdot \ex[X_{(i+1)r}]\right \rfloor,p_{(i+1)r}\right)\ge r\right)\\
&\le \exp\left(-\frac{1}{3} \cdot \left(\frac{r}{2}-1\right)\right)\\
&\ll n^{-3}
\end{align*}
for sufficiently large $n$,
where we use the Chernoff Bound for the second inequality.
Let $E'=\sum_{0 \le i < \kappa}\left\lfloor\frac{r}{2}\ex[X_{(i+1)r}]\right \rfloor$.
Then, we have
\begin{align*}
E'
= \sum_{0 \le i < \kappa}
\left \lfloor
\frac{r}{2}\cdot
\frac{n(n-1)}{2(i+1)r(n-(i+1)r)}\right \rfloor
= \Omega\left(
\sum_{0 \le i < \kappa}
\frac{n}{i}
\right)
= \Omega(n \log \kappa) = \Omega(n \log n).
\end{align*}
Thus, for some (small) constant $c_v$
and sufficiently large $n$, 
we have $\cnlogn < E'$.
To conclude, we have
\begin{align*}
\Pr(|F(v,\cnlogn)|> \threshold) &= \Pr(S_{1,\threshold} \le \cnlogn)\\
&<
\Pr\left (\sum_{0 \le i < \kappa}S'_i \le \cnlogn \right)\\
&< \Pr\left (\sum_{0 \le i < \kappa}S'_i \le E'\right)\\
&< \sum_{0 \le i < \kappa} \Pr\left(S'_i \le \left\lfloor\frac{r}{2}\ex[X_{(i+1)r}]\right \rfloor\right)\\
&\ll n^{-2}
\end{align*}
for sufficiently large $n$.
\end{proof}

\begin{theorem}
\label{theorem:result} 
Proposition \ref{prop:goal} holds.
That is, 
every leader election protocol requires 
$\Omega(\log n)$ (parallel) stabilization time
with probability $1-o(1)$.
\end{theorem}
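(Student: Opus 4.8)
The plan is to assemble the lemmas already established, splitting on the initial output $\outputs(\sinit)$, which for a leader election protocol must be either $L$ or $F$. First I would dispose of the easy case $\outputs(\sinit) = L$: here Corollary \ref{col:initialleader} yields Proposition \ref{prop:goal} immediately, so nothing further is required. The substantive case is $\outputs(\sinit) = F$, which matches the standing assumption under which Lemmas \ref{lem:influencer} and \ref{lem:growspeed} were proved, and which I would handle by contradiction.

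So suppose, toward a contradiction, that Proposition \ref{prop:goal} fails, i.e.\ that for every constant $c$ the parallel stabilization time is below $c \ln n$ with probability $1-o(1)$. Under this hypothesis Lemma \ref{lem:influencer} applies, so execution $\Xi$ can reach a safe configuration $C_t$ only at a step $t$ for which $|F(v,t)| > n^{2/3}$ holds for some $v \in V$. By the definition of $\tmin$ in Lemma \ref{lem:growspeed}, every such step satisfies $t \ge \tmin$; hence the first safe configuration, and therefore the unscaled stabilization time, is at least $\tmin$. Lemma \ref{lem:growspeed} then furnishes a constant $c$ with $\Pr(\tmin \ge c n \ln n) = 1-o(1)$, so the parallel stabilization time is at least $c \ln n$ with probability $1-o(1)$. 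This is precisely Proposition \ref{prop:goal} for this $c$, which contradicts the failure hypothesis (applied with the same constant $c$, since the events ``time $< c\ln n$'' and ``time $\ge c\ln n$'' are complementary and cannot both have probability $1-o(1)$). This settles the case and hence the theorem.

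Now that the epidemic-style growth estimate has been isolated in Lemma \ref{lem:growspeed}, the main obstacle here is bookkeeping rather than new mathematics. I must make sure the conditional hypotheses of Corollary \ref{col:initlimit} and Lemma \ref{lem:influencer} (each phrased as ``if Proposition \ref{prop:goal} does not hold'') are legitimately available inside the single contradiction argument, and that the few $1-o(1)$ events are combined correctly, noting that a finite union of $o(1)$ failure probabilities remains $o(1)$. The one delicate junction is between the two formulations of failure: the contradiction is obtained by exhibiting one \emph{explicit} constant $c$ for which the w.h.p.\ lower bound holds, which directly refutes the statement that for \emph{every} constant the time is below $c \ln n$ with probability $1-o(1)$.
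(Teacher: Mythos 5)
Your proposal is correct and follows essentially the same route as the paper: the paper also proves the theorem by contradiction, combining Lemma \ref{lem:influencer} (safe configurations require some agent's influencer set to exceed $n^{2/3}$) with Lemma \ref{lem:growspeed} (this takes $\Omega(n\log n)$ steps \whp), with the case $\outputs(\sinit)=L$ already disposed of by Corollary \ref{col:initialleader}. If anything, your write-up is more careful than the paper's own three-line proof, which loosely says ``assume the expected parallel stabilization time is $o(\log n)$'' where the probabilistic negation you spell out (one explicit constant $c$ refuting the ``for every constant'' failure hypothesis, via complementary events) is what is actually needed.
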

\begin{proof}
Assume that the expected parallel stabilization time
of protocol $P$ is $o(\log n)$.
By Lemma \ref{lem:influencer},
an execution cannot reach a safe configuration
before $|F(v,t)| \ge n^{2/3}$ holds for some $v \in V$.
However, Lemma $\ref{lem:growspeed}$
yields that this requires $\Omega(\log n)$ parallel time,
contradiction.
\end{proof}
The following two theorems immediately follows from
Theorem \ref{theorem:result}.
\begin{theorem}
Every leader election protocol requires $\Omega(\log n)$
(parallel) stabilization time in expectation.
\end{theorem}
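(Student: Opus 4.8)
The final statement to prove is the theorem that every leader election protocol requires $\Omega(\log n)$ parallel stabilization time in expectation. The plan is to derive this expectation bound as a routine consequence of the high-probability bound already established in Theorem \ref{theorem:result}, which states that the stabilization time is at least $c\ln n$ parallel time with probability $1-o(1)$ for some constant $c$.

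First I would recall that the (parallel) stabilization time $\tau$ is a nonnegative random variable, so its expectation is well defined (possibly infinite, which would only strengthen the claim). By Theorem \ref{theorem:result}, there is a constant $c>0$ and a function $g(n)=o(1)$ such that $\Pr(\tau \ge c\ln n) \ge 1-g(n)$ for all sufficiently large $n$. The core idea is simply that a random variable which is large with probability close to one must have large expectation. Concretely, since $\tau$ is nonnegative, I would apply the elementary bound
\begin{align*}
\ex[\tau] \ge c\ln n \cdot \Pr(\tau \ge c\ln n) \ge c\ln n \cdot (1-g(n)).
\end{align*}
Because $1-g(n) \to 1$, for all sufficiently large $n$ we have $1-g(n) \ge 1/2$, whence $\ex[\tau] \ge (c/2)\ln n = \Omega(\log n)$, which is exactly the desired conclusion.

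There is essentially no serious obstacle here; the only care needed is to make explicit that the expectation is taken over the randomness of the scheduler $\sch$ and that the first inequality above is the standard consequence of $\ex[\tau] \ge \ex[\tau \cdot \mathbf{1}_{\{\tau \ge c\ln n\}}] \ge c\ln n \cdot \Pr(\tau \ge c\ln n)$, valid precisely because $\tau \ge 0$. One subtlety worth flagging is the case where $P$ fails to stabilize on some schedules: the paper's framework does not assume $P$ always stabilizes, so on such schedules the stabilization time is $+\infty$. This only makes $\ex[\tau]$ larger (indeed infinite), so the lower bound holds trivially in that regime and the argument above covers the remaining case. Thus the theorem follows directly, and I would keep the proof to just these few lines rather than reproving any of the influencer machinery, which has already done all the real work in Lemma \ref{lem:growspeed} and Theorem \ref{theorem:result}.
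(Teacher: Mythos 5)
Your proof is correct and matches the paper's intent exactly: the paper simply states that this theorem ``immediately follows'' from Theorem \ref{theorem:result}, and your argument via $\ex[\tau] \ge c\ln n \cdot \Pr(\tau \ge c\ln n) \ge c\ln n \cdot (1-o(1))$ is precisely the standard one-line justification being left implicit. Your handling of the non-stabilizing case (infinite expectation only strengthens the bound) is a nice explicit touch, but there is no substantive difference from the paper's route.
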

\begin{theorem}
No leader election protocol
stabilizes within $o(\log n)$ time
with high probability (\ie with probability $1-O(n^{-1})$).
\end{theorem}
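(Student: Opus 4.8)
The plan is to derive this theorem directly from Theorem \ref{theorem:result} (equivalently Proposition \ref{prop:goal}), since the two statements are contrapositives of each other once the asymptotic notation is unpacked. First I would fix an arbitrary leader election protocol $P$ together with an arbitrary function $g(n) = o(\log n)$, and recall from Proposition \ref{prop:goal} that there is a constant $c$ for which the parallel stabilization time of $P$ is at least $c \ln n$ with probability $1 - o(1)$. Writing $T$ for the (random) parallel stabilization time of $P$, this is precisely the statement $\Pr(T < c \ln n) = o(1)$.

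Next I would exploit the separation of growth rates between the fixed threshold and the admissible time bound. Because $g(n) = o(\log n)$, for all sufficiently large $n$ we have $g(n) < c \ln n$, so the event $\{T \le g(n)\}$ is contained in the event $\{T < c \ln n\}$. Monotonicity of probability then gives $\Pr(T \le g(n)) \le \Pr(T < c \ln n) = o(1)$. This already establishes the quantitative form stated in the introduction, namely that every leader election protocol stabilizes within $o(\log n)$ parallel time with probability only $o(1)$.

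Finally, to phrase this as the stated impossibility, I would argue by contradiction: if some protocol were to stabilize within $o(\log n)$ time with high probability, i.e. with probability $1 - O(n^{-1})$, then we would have $\Pr(T \le g(n)) \ge 1 - O(n^{-1})$, a quantity tending to $1$. This contradicts $\Pr(T \le g(n)) = o(1)$, which tends to $0$, for all large $n$. Hence no such protocol exists, which is the claim.

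I do not expect a genuine obstacle here, since the mathematical content is already carried by Theorem \ref{theorem:result}; the only care required is bookkeeping with the asymptotic symbols. The step most worth stating carefully is the monotone containment $\{T \le g(n)\} \subseteq \{T < c \ln n\}$, because this is what allows the single fixed threshold $c \ln n$ to dominate every admissible $o(\log n)$ function simultaneously, and it is also where one must confirm that a probability which is $o(1)$ cannot at the same time be $1 - O(n^{-1})$.
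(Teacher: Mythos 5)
Your proposal is correct and matches the paper exactly: the paper states that this theorem ``immediately follows from Theorem \ref{theorem:result}'' without spelling out the details, and your argument is precisely that immediate derivation, with the quantifier bookkeeping (fixing an arbitrary $g(n)=o(\log n)$, the containment $\{T \le g(n)\} \subseteq \{T < c\ln n\}$, and the incompatibility of $o(1)$ with $1-O(n^{-1})$) made explicit.
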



\section{Conclusion}
In this paper, we proved that
in the population protocol model,
any leader election protocol requires
$\Omega(\log n)$ parallel stabilization time
both in expectation and with high probability.
This lower bound holds even if
the protocol use an arbitrarily large number of agent states
and each agent knows the exact size $n$ of a population.


\section*{Acknowledgments}
This work was supported by JSPS KAKENHI Grant Numbers
17K19977, 18K18000, and 19H04085
and JST SICORP Grant Number JPMJSC1606.
\bibliographystyle{unsrt}
\bibliography{paper3} 
\end{document}